\newtheorem{theorem}{Theorem}[section]
\newtheorem{proposition}[theorem]{Proposition}
\theoremstyle{definition}
\newtheorem{example}[theorem]{Example}
\theoremstyle{remark}
\newtheorem{remark}[theorem]{Remark}
\numberwithin{equation}{section}
\begin{document}

\title{\LARGE\textsc{A General Formula for the\\ Generation Time}}

\vskip 1cm

\author[1]{Fran\c{c}ois Bienvenu}
\author[2,3]{Lloyd Demetrius}
\author[1]{ St\'{e}phane Legendre \thanks{Corresponding author: legendre@ens.fr}}
\affil[1]{Team of Mathematical Eco-Evolution, Ecole Normale Sup\'{e}rieure, Paris, France}
\affil[2]{Department of Organismic and Evolutionary Biology, Harvard University,  USA}
\affil[3]{Max Planck Institute for Molecular Genetics, Berlin, Germany}


\date{10/06/2013}

\maketitle

\begin{abstract}
We show that the generation time -- a notion usually described in a biological context -- can be defined in a general way as a return time in a conveniently constructed finite Markov chain. The simple formula we obtain agrees with previous results derived for structured populations projected in discrete time, and allows to define the generation time of any process described by a weighted directed graph whose matrix is primitive.
\end{abstract}

\section{Introduction}\label{introduction}

The generation time $T$ is a biological notion, intuitively thought of as the time between two generations. In age-classified population models, $T$ is the mean age of the mothers at the birth of their offspring. 

However, for many organisms, the pertinent stages of life history are not classified by age but by other biological parameters, like size. This is the case in many animal species (e.g., in reptiles, fishes, and arthropods) and most often the case in plants. For example, plant phenotypes may stay in a size class for an indefinite length of time until favorable light conditions allow them to grow and get to another size class. Moreover, on top of the main stage-descriptive parameter (age, size), population dynamics models often consider stages indexed by other descriptive variables (morphotype, behavioral category). For example, in metapopulation models, the sites of the local populations connected by migrations are indexing the age- or size- classes.

As the generation time is a fundamental biological descriptor, linked with the cycle time of biochemical reactions within the organism \cite{Demetrius_2006,DemetriusLegendre_2009,Demetrius_2013}, it is desirable to compute it for a large class of models, i.e., for life cyles more complex than age-classified ones. This has been done notably by Lebreton \cite{Lebreton_1996} for age-classified metapopulation models, and by Cochran and Ellner \cite{CochranEllner_1992} for complex life cycles. These articles describe the computation of many useful demographic descriptors, but the formulas for $T$ are rather complicated.

A multicellular organism can be envisioned as a set of germ cells (producing the gametes), that are carried by a body made of somatic cells -- the vehicle of the germ cells. The generation time can then be seen as the mean time by which the germ cells shift vehicle. $T$ can therefore be defined as the mean time of first return to the transition leading to the creation of a novel organism (by fusion of the gametes in sexual organisms). We call \textit{reproductive} such a transition.

In this study, we develop a general setting for computing the generation time $T$ as the mean time of first return in a conveniently constructed finite Markov chain. The formula is surprisingly simple:
\begin{displaymath}
T = \frac{1}{\displaystyle\sum_{[j \to i] \in R} e_{ij}}.
\end{displaymath} 
In this expression, $R$ is the set of reproductive arcs $j \to i$ in the weighted directed graph representing the life cycle of the population, and the $e_{ij}$'s are the elasticities of the matrix associated with the digraph.

\section{Matrix population models}\label{matrixpop} 
In its life history, an organism goes through different stages, typically developmental stages toward a mature form, followed by reproductive stages where offspring is produced. A population is considered as a set of identical individuals sharing the same life cycle, parameterized by average demographic parameters (survival, fecundity).

Matrix population models \cite{Caswell_2000} allow to project in discrete time $t = 0, 1,2, \dots$ populations structured by age, size or other classifying parameters. In this framework, an individual within the population is represented by the life cycle graph, a weigthed directed graph $\mathcal{A}$. The nodes of $\mathcal{A}$ represent the stages traversed by individuals during their lives and the arcs are weigthed by the demographic parameters (Fig. \ref{fig_leslie}). The weight $a_{ij}$ associated with the arc $j \to i$ represents the contribution of stage $j$ at time $t$ to stage $i$ at time $t+1$. We note that the adjacency matrix of the weighted digraph  $\mathcal{A}$ is the transpose $^t\mathbf{a}$. The population matrix $\mathbf{a} = (a_{ij})$ allows to compute the number of individuals in the stages from one time step to the next:
\begin{equation}\label{popi}
n_i(t+1) = \sum_{j} a_{ij} n_j(t). 
\end{equation} 
Here $n_i$ is the number of individuals in stage $i$. In matrix form,
\begin{displaymath}
\mathbf{n}(t+1) = \mathbf{a}\mathbf{n}(t),
\end{displaymath} 
with $\mathbf{n}(t) = (n_i(t))$ the population vector at time $t$.

We assume that the non negative population matrix $\mathbf{a}$ is primitive (irreducible and aperiodic). Under this assumption, the Perron-Frobenius theorem garantees the existence of a dominant eigenvalue $\lambda > 0$, a real eigenvalue that is largest in modulus than all other eigenvalues  \cite{Seneta_2006}. The dominant eigenvalue $\lambda$ is the asymptotic growth rate of the population. Indeed, for large $t$, the total population size at time $t$, $n(t) = \sum_i n_i(t)$, verifies
\begin{displaymath} 
n(t+1) \sim \lambda n(t). 
\end{displaymath} 
The left and right eigenvectors of $\mathbf{a}$ with respect to $\lambda$, $\mathbf{v}$ and $\mathbf{w}$, have positive entries. The right eigenvector $\mathbf{w} = (w_i)$, when normalized $\sum_i w_i=1$, is called the \textit{stable stage distribution}. Indeed, for all $i$, the proportion of individuals in stage $i$ at time $t$ converges toward $w_i$:
\begin{equation}\label{wi}
\frac{n_i(t)}{n(t)} \to w_i.
\end{equation}  
In the Leslie model \cite{Leslie_1945}, the stages correspond to age classes. The first row of the population matrix is constituted of fertility transitions ($f_i$) leading to the newborn stage 1, and the subdiagonal contains survival transitions ($s_i$) (Fig. \ref{fig_leslie}). The growth rate $\lambda$ is the largest root of the characteristic equation 
\begin{displaymath} 
f(X) = \sum_{i} l_i f_i X^{-i} = 1, 
\end{displaymath} 
where $l_1 = 1$ and $l_i = s_1 \ldots s_{i-1}$ for $i>1$ \cite{Euler_1760,Caswell_2000}. Therefore, the quantities $q_i = l_i f_i \lambda^{-i}$ verify $\sum_i q_i = 1$ and form a distribution. The mean of this distribution,
\begin{equation}\label{gentime1}
T = \sum_{i} i l_i f_i \lambda^{-i}, 
\end{equation} 
is the generation time, interpreted as the mean age of the mothers at the birth of their daughters when the population is at the stable age distribution.

\begin{figure}[ht]
\setlength{\unitlength}{0.8cm}
\begin{picture}(14,6)(0,0)
\put(0,0){\includegraphics[scale=0.34]{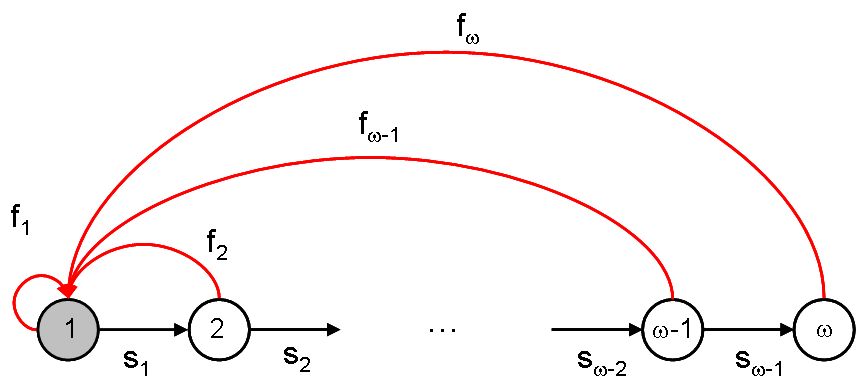}}
\put(14.5,2)
{$\begin{bmatrix}
f_1    & f_2    & \cdots & f_{\omega-1} & f_\omega\\
s_1    & 0      & \cdots & 0            & 0\\
0      & s_2    & \cdots & 0            & 0\\
\vdots &        &        &              &\vdots\\
0      &        & \cdots & s_{\omega-1} & 0
\end{bmatrix}$}
\end{picture}
\caption{The life cycle graph of the age-classified Leslie model and the corresponding matrix. Stages $1,2,\ldots ,\omega$ correspond to individuals aged $1,2,\ldots ,\omega$. Reproductive transitions are in red and point to the newborn stage 1. $s_i$ is the survival rate from age $i$ to age $i+1$ and $f_i$ is the fertility rate of individuals aged $i$. When age at first reproduction $\alpha > 1$, the fertilities $f_1, \ldots, f_{\alpha-1}$ are set to 0.}
\label{fig_leslie}
\end{figure}

\section{The Markov matrix}\label{markovmatrix} 
To perform random walks in the digraph  $\mathcal{A}$, and compute the generation time as a return time, we first normalize the matrix $\mathbf{a}$ into a matrix $\mathbf{p}$ so that the weights $p_{ij}$ associated with the out-arcs $i \to j$ from any stage $i$ sum to 1: $\sum_j p_{ij} = 1$. Thus $\mathbf{p}$ will be a Markov matrix.

The convenient Markov matrix $\mathbf{p}$, see \cite{Demetrius_1974,Demetrius_1975,Tuljapurkar_1982}, is given by
\begin{equation}\label{pij}
p_{ij} = \frac{a_{ij}w_j}{\lambda w_i}.
\end{equation}
We check that the rows of $\mathbf{p}$ sum to 1: as $\mathbf{w}$ is a right eigenvector of $\mathbf{a}$, 
\begin{displaymath}
\sum_j a_{ij}w_j = \lambda w_i,
\end{displaymath}
we have
\begin{displaymath}
\sum_j p_{ij} = \frac{1}{\lambda w_i} \sum_j a_{ij}w_j = 1.
\end{displaymath}
An informal argument for formula (\ref{pij}) is the following. There are $n_i(t)$ individuals in stage $i$ at time $t$, $a_{ij} n_j(t-1)$ of which come from stage $j$. Hence the probability of coming from stage $j$ for an individual in stage $i$ is 
\begin{displaymath}
\frac{a_{ij} n_j(t-1)}{n_i(t)}. 
\end{displaymath}
Assuming the population at the stable stage distribution, we have from (\ref{wi}), $n_j(t-1) \sim w_j n(t-1)$ and $n_i(t) \sim \lambda w_i n(t-1)$, giving (\ref{pij}). Moreover, the apparent absence of transposition in (\ref{pij}) comes from the fact that two transpositions have actually been performed: one to reverse the time as we explore the genealogy, the second to switch from the formalism of matrix population models to that of Markov chains.

To the digraph $\mathcal{A}$ is associated the digraph $\mathcal{P}$ that has the same nodes and arcs as $\mathcal{A}$, but where the weight associated with the arc $i \to j$ is $p_{ij}$, i.e., $\mathbf{p}$ is the adjacency matrix of $\mathcal{P}$.

\begin{proposition}\label{stationary_p}
The stationary distribution $\boldsymbol{\pi}$ of the Markov chain $\mathbf{p}$ associated with the population matrix $\mathbf{a}$ is given by 
\begin{equation}\label{pi_j}
\pi_j = \frac{v_j w_j}{\mathbf{v} \mathbf{w}},
\end{equation}
where $\mathbf{v} = (v_j)$ and $\mathbf{w} = (w_j)$ are left and right eigenvectors of $\mathbf{a}$ with respect to the dominant eigenvalue.
\end{proposition}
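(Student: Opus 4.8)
The plan is to verify directly that the vector $\boldsymbol{\pi}$ defined by \eqref{pi_j} is a probability vector fixed by $\mathbf{p}$, and then to invoke uniqueness of the stationary distribution. First I would record that $\mathbf{p}$ has exactly the same zero/nonzero pattern as $\mathbf{a}$, since formula \eqref{pij} merely rescales each entry by the strictly positive quantities $w_j$ and $\lambda w_i$; hence $\mathbf{p}$ is itself primitive, in particular irreducible, and the Perron--Frobenius theorem applied to the stochastic matrix $\mathbf{p}$ guarantees a unique stationary distribution. It therefore suffices to exhibit one.

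Next I would check the two defining properties. Positivity and normalization are immediate: by Perron--Frobenius the entries $v_j$ and $w_j$ are strictly positive, so $\pi_j > 0$, and since $\mathbf{v}\mathbf{w} = \sum_j v_j w_j$ we get $\sum_j \pi_j = 1$. The substance of the argument is the stationarity identity $\sum_i \pi_i p_{ij} = \pi_j$. Substituting \eqref{pi_j} and \eqref{pij}, the sum telescopes as
\begin{displaymath}
\sum_i \pi_i p_{ij} = \frac{1}{\lambda\,\mathbf{v}\mathbf{w}} \sum_i v_i w_i \, \frac{a_{ij} w_j}{w_i} = \frac{w_j}{\lambda\,\mathbf{v}\mathbf{w}} \sum_i v_i a_{ij},
\end{displaymath}
and since $\mathbf{v}$ is a left eigenvector of $\mathbf{a}$ for $\lambda$, i.e. $\sum_i v_i a_{ij} = \lambda v_j$, the right-hand side reduces to $\frac{v_j w_j}{\mathbf{v}\mathbf{w}} = \pi_j$, as desired.

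I do not anticipate a genuine obstacle; the only points requiring care are conventions. One must use the left eigenvector in the row form $\mathbf{v}\mathbf{a} = \lambda\mathbf{v}$, consistent with the row-sum normalization built into $\mathbf{p}$, and interpret the stationary distribution as a left fixed vector $\boldsymbol{\pi}\mathbf{p} = \boldsymbol{\pi}$. Once primitivity of $\mathbf{p}$ has been noted to secure uniqueness, the verification above completes the proof.
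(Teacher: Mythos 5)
Your proposal is correct and follows essentially the same route as the paper: a direct verification that $\boldsymbol{\pi}$ is a normalized left fixed vector of $\mathbf{p}$, using $\mathbf{v}\mathbf{a}=\lambda\mathbf{v}$ in exactly the same computation. The only difference is that you explicitly justify uniqueness via the primitivity of $\mathbf{p}$ (which the paper leaves to a citation), a welcome but minor addition.
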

\begin{proof}
The entries $\pi_j$ of the row vector $\boldsymbol{\pi}$ sum to 1 because of the normalization by the dot product
\begin{displaymath}
\mathbf{v}\mathbf{w} = \sum_j v_jw_j.
\end{displaymath}
We show that $\boldsymbol{\pi}$ is a left eigenvector of $\mathbf{p}$ with respect to the eigenvalue 1. Using the relation $\mathbf{v}\mathbf{a} = \lambda \mathbf{v}$, we have
\begin{displaymath}
(\mathbf{v}\mathbf{w}) \sum_i \pi_i p_{ij} = \sum_i v_i w_i p_{ij} = \sum_i v_i w_i \frac{a_{ij}w_j}{\lambda w_i} = \frac{w_j}{\lambda} \sum_i v_i a_{ij} = v_j w_j = (\mathbf{v} \mathbf{w}) \pi_j.
\end{displaymath}
Hence $\boldsymbol{\pi}\mathbf{p} = \boldsymbol{\pi}$, and $\boldsymbol{\pi}$ is the stationary distribution of the Markov chain \cite{Seneta_2006}.
\end{proof}

As the digraph $\mathcal{P}$ is irreducible, there is a directed path from any stage to any other. In particular, any stage is contained in a directed cycle. The mean time of first return to stage $j$ is given by
\begin{equation}\label{1surpi1}
\tau_j = \frac{1}{\pi_j}.
\end{equation}
More generally, if $B$ is a subset of the set of stages, the mean time of first return to $B$ is
\begin{equation}\label{1sursompi}
\tau_B = \frac{1}{\displaystyle\sum_{j \in B}^{} \pi_j}.
\end{equation}
Indeed, the quantity $\sum_{j \in B} \pi_j$ is the asymptotic proportion of particles in the stages $j \in B$ after they have traversed the digraph according to the probabilities of $\mathbf{p}$. The inverse of this quantity is the mean time of first return of the particles to $B$. 

\begin{example}\label{example_leslie}
For the Leslie matrix, left and right eigenvectors $\mathbf{v}$ and $\mathbf{w}$ are given by
\begin{displaymath}
v_i = \frac{\lambda^{i-1}}{l_i}\sum_{j=1}^{\omega} l_j f_j \lambda^{-j},  \quad w_i = l_i \lambda^{-(i-1)}, \quad i = 1, \ldots, \omega.
\end{displaymath}
Here $\omega$ is the total number of age classes. After some algebra, 
\begin{displaymath}
\mathbf{v} \mathbf{w} = \sum_i i l_i f_i \lambda^{-i} = T.
\end{displaymath}
As (\ref{pi_j}) is independent of the scaling of the eigenvectors, we may rescale $\mathbf{v}$ and $\mathbf{w}$ so that $v_1 = 1$ and $w_1 =1$, and we recover (\ref{1surpi1}):
\begin{displaymath}
T = \frac{1}{\pi_1}.
\end{displaymath}
\end{example}

\section{Reproductive arcs}\label{reproarcs} 
Newborn stages are characterized by the fact that they are entered by reproductive arcs. By \textit{reproductive arc}, we mean any transition $j \to i$ in the life cycle graph such that individuals in stage $j$ contribute to \textit{novel} individuals in stage $i$. These arcs are determined by the biology of the species. In animals, reproductive arcs are usually identified as transitions leading to the production of offspring, but organisms may also reproduce by fission, as is often the case in plants. In this case, the corresponding arcs may be considered as reproductive or not depending on the biological question.

Newborn stages are ambiguously defined because non reproductive arcs (survival, migration) can also lead to such stages. This is for example the case in the life cycle graph of the teasel \textit{Dipsacus sylvestris} \cite{Caswell_1978b,CochranEllner_1992,Caswell_2000} (Fig. \ref{fig_teasel}). In this annual plant, the arc \textit{Small rosette} $\to$ \textit{Medium rosette} together with a self-loop enter the \textit{Medium rosette} stage. These transitions  correspond to a probability of changing size class and of staying in the size class, respectively, and are not reproductive. Nevertheless, the stage \textit{Medium rosette} is a newborn stage because of the reproductive transition \textit{Flowering plant} $\to$ \textit{Medium rosette}.

\begin{figure}
\begin{center}
\includegraphics[scale=0.4]{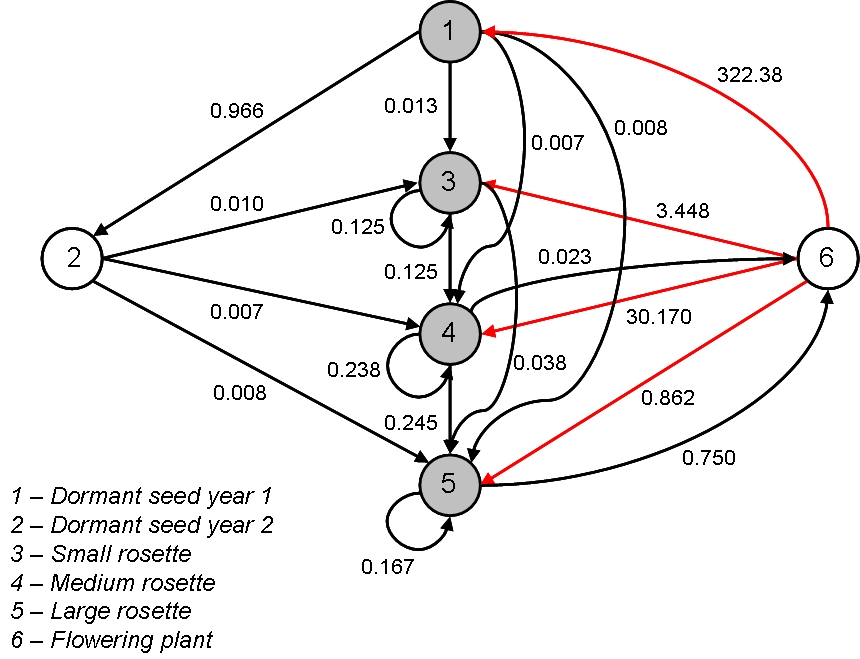}
\caption{The life cycle graph of the teasel \textit{Dipsacus sylvestris}. Newborn stages are in grey and reproductive arcs are in red.}
\label{fig_teasel}
\end{center}
\end{figure} 

Thus, we wish to compute the generation time not as the mean time of first return to a newborn stage, but as the mean time of first return to a reproductive arc. To this end, we construct a convenient weighted digraph $\tilde{\mathcal{P}}$ from the digraph $\mathcal{P}$. The nodes of $\tilde{\mathcal{P}}$ are the arcs $i \to j$ of $\mathcal{P}$, and we create an arc between 2 nodes $a = [i \to j]$, $b = [k \to l]$ of $\tilde{\mathcal{P}}$ if and only if $k = j$, in which case the weight
\begin{displaymath}
\tilde{p}_{ab} = \tilde{p}_{i \to j, j \to l} = p_{jl}
\end{displaymath} 
is associated with the arc $a \to b$ (Fig. \ref{fig_ptilde}). In other words, there is an arc joining $a = [i \to j]$ to $b = [k \to l]$ in $\tilde{\mathcal{P}}$ if and only if $a$ is an in-arc of node $j$ and $b$ an out-arc of this node in $\mathcal{P}$ (thus $k = j$). The weight associated with $a \to b$ in $\tilde{\mathcal{P}}$ is the weight $p_{jl}$ associated with the out-arc $j \to l$ in $\mathcal{P}$. By construction, in $\tilde{\mathcal{P}}$, all arcs entering the node $b = [j \to l]$ bear the same weight $p_{jl}$.

The adjacency matrix of the digraph $\tilde{\mathcal{P}}$ is denoted $\tilde{\mathbf{p}} = (\tilde{p}_{ab})$.

\begin{figure}
\begin{center}
\includegraphics[scale=0.4]{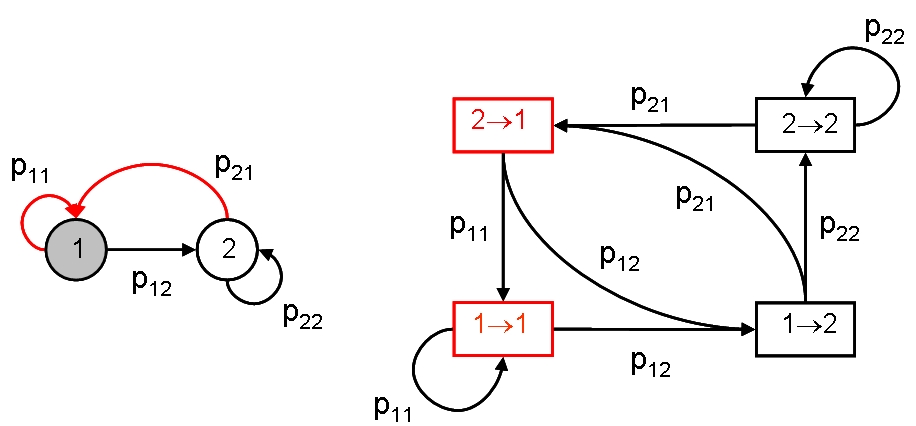}
\caption{The digraph $\mathcal{P}$ associated with a 2 stages life cycle, and the corresponding digraph $\tilde{\mathcal{P}}$ whose nodes are the arcs of  $\mathcal{P}$. Reproductive arcs are in red.}
\label{fig_ptilde}
\end{center}
\end{figure} 
 
\begin{proposition}\label{ptilde}
\noindent
\begin{itemize}
\item The matrix $\tilde{\mathbf{p}}$ is a Markov matrix that is irreducible (primitive) if and only if the Markov matrix $\mathbf{p}$ is irreducible (primitive).
\item The return time to the transition $i \to j$ in the digraph $\mathcal{P}$ is the same as the return time to the node $a = [i \to j]$ in the digraph $\tilde{\mathcal{P}}$.
\end{itemize}
\end{proposition}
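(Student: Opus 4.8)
The plan is to verify the two bullet points in turn, both by direct computation on the weights $\tilde p_{ab}$.

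For the first bullet, I would start by checking that $\tilde{\mathbf{p}}$ is stochastic. Fix a node $a=[i\to j]$ of $\tilde{\mathcal P}$. Its out-arcs go to the nodes $b=[j\to l]$ for each out-arc $j\to l$ of $j$ in $\mathcal P$, with weight $\tilde p_{ab}=p_{jl}$. Hence $\sum_b \tilde p_{ab}=\sum_l p_{jl}=1$ since $\mathbf{p}$ is a Markov matrix; this holds regardless of which in-arc of $j$ the node $a$ happens to be. For irreducibility, I would argue that directed paths in $\tilde{\mathcal P}$ correspond bijectively to directed paths in $\mathcal P$ of length at least one: a path $[i_0\to i_1]\to[i_1\to i_2]\to\cdots\to[i_{m-1}\to i_m]$ in $\tilde{\mathcal P}$ is exactly the data of a walk $i_0\to i_1\to\cdots\to i_m$ in $\mathcal P$. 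So if $\mathcal P$ is strongly connected, given two nodes $[i\to j]$ and $[k\to l]$ of $\tilde{\mathcal P}$, I take a path in $\mathcal P$ from $j$ to $k$ (possibly empty if $j=k$) and prepend/append the arcs $i\to j$ and $k\to l$ to get a path in $\tilde{\mathcal P}$; conversely strong connectedness of $\tilde{\mathcal P}$ forces that of $\mathcal P$ since any arc of $\mathcal P$ is a node of $\tilde{\mathcal P}$ and an arc $i\to j$ in $\mathcal P$ can always be reached. For aperiodicity I would use the same correspondence: the gcd of cycle lengths through a fixed arc of $\mathcal P$ equals the gcd of cycle lengths through the corresponding node of $\tilde{\mathcal P}$ (a closed walk at $[i\to j]$ of length $m$ is a closed walk at $i$ in $\mathcal P$ of length $m$ passing through the arc $i\to j$), so primitivity transfers in both directions.

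For the second bullet, I would make precise the notion of ``return time to the transition $i\to j$'' for the chain $\mathbf{p}$: it is the expected number of steps for the $\mathbf{p}$-chain, having just traversed the arc $i\to j$ (i.e.\ currently at $j$, having come from $i$), to traverse that same arc again. The key observation is that the $\tilde{\mathbf{p}}$-chain is precisely the ``arc process'' of the $\mathbf{p}$-chain: if $(X_0,X_1,X_2,\dots)$ is a realization of $\mathbf{p}$, then $(Y_0,Y_1,\dots)$ with $Y_t=[X_t\to X_{t+1}]$ is a Markov chain on the arcs, and its transition probabilities are exactly $\mathbb P(Y_{t+1}=[j\to l]\mid Y_t=[i\to j])=p_{jl}=\tilde p_{[i\to j],[j\to l]}$. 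Hence a first return of $Y$ to the state $a=[i\to j]$ occurs at time $m$ iff the $\mathbf{p}$-walk traverses the arc $i\to j$ at steps $0$ and $m$ but not in between, so the two return-time distributions, hence their means, coincide. This reduces the statement to recognizing $\tilde{\mathbf{p}}$ as the one-step transition kernel of the edge process, which is immediate from the definition of $\tilde p_{ab}$.

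The main obstacle, such as it is, is bookkeeping rather than mathematics: one has to be careful that in $\tilde{\mathcal P}$ the weight on an in-arc of $b=[j\to l]$ depends only on $b$ (it is $p_{jl}$), not on the source node $a$ — this is exactly what makes the row sums work and what makes the edge process Markov — and one must phrase ``return time to an arc'' unambiguously as a return time to a state of the lifted chain. Once the edge-process identification is stated cleanly, both bullets follow; I expect the proof to be short, with the first bullet handled by the path-correspondence argument above and the second by the one-line observation that $\tilde{\mathbf{p}}$ is the transition matrix of $(Y_t)$.
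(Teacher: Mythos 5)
Your proposal is correct and follows essentially the same route as the paper: verify the row sums of $\tilde{\mathbf{p}}$ directly, lift paths of $\mathcal{P}$ to paths of $\tilde{\mathcal{P}}$ (and back) to transfer irreducibility and the gcd of cycle lengths, and deduce the return-time identity from the correspondence of walks. Your explicit identification of $\tilde{\mathbf{p}}$ as the transition kernel of the arc process $Y_t=[X_t\to X_{t+1}]$ is just a cleaner, more probabilistic phrasing of the paper's terse remark that paths in $\mathcal{P}$ and $\tilde{\mathcal{P}}$ have the same lengths and weights, and it handles the length bookkeeping more carefully than the paper does.
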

\begin{proof}
Let the index $a = [i \to j]$ be fixed. As the entries of the row of index $j$ of  $\mathbf{p}$ sum to 1, we have 
\begin{displaymath}
\sum_b \tilde{p}_{ab} = \sum_l \tilde{p}_{i \to j, j \to l} = \sum_l p_{jl} = 1.
\end{displaymath}
Hence, the entries of the row of index $a$ of $\tilde{\mathbf{p}}$ sum to 1, and $\tilde{\mathbf{p}}$ is a Markov matrix.

Let $a = [i \to j]$ and $b = [k \to l]$ be any 2 nodes in $\tilde{\mathcal{P}}$. Since $\mathbf{p}$ is irreducible, there is a directed path from $j$ to $k$ in $\mathcal{P}$, say
\begin{equation} \label{path_p}
j \to j_1 \to j_2 \to \cdots \to j_{m-1} \to k.
\end{equation}
By irreducibility of $\mathcal{P}$, $j$ belongs to a cycle, so that there exists an arc $i \to j$ for some node $i \neq j$. In $\tilde{\mathcal{P}}$, we now have the path
\begin{equation} \label{path_ptilde}
a = [i \to j] \to [j \to j_1] \to [j_1 \to j_2] \to \cdots \to [j_{m-1} \to k] \to [k \to l] = b.
\end{equation}
This shows that $\tilde{\mathbf{p}}$ is irreducible. Moreover, the path (\ref{path_ptilde}) as the same length $m$ and the same weights as the original path (\ref{path_p}). 
Conversely, from a path (\ref{path_ptilde}) in $\tilde{\mathcal{P}}$ we can construct a path (\ref{path_p}) in $\mathcal{P}$ that has the same length and weights as the path (\ref{path_ptilde}). 
The second assertion of Proposition \ref{ptilde} is a direct consequence of the fact that paths in $\mathcal{P}$ and $\tilde{\mathcal{P}}$ have the same lengths and weights (though they are not in the same number). As a result, the greatest common divisor of cycle lengths in $\mathcal{P}$ and $\tilde{\mathcal{P}}$ are equal, implying the equivalence of $\mathbf{p}$ primitive and $\tilde{\mathbf{p}}$ primitive.  
\end{proof}

\begin{proposition}\label{stationary_ptilde}
The stationary distribution of the Markov chain $\tilde{\mathbf{p}}$ associated with the Markov chain $\mathbf{p}$ is given by
\begin{equation}\label{wa}
\tilde{\boldsymbol{\pi}} = (\tilde{\pi}_b), \quad \tilde{\pi}_b = \tilde{\pi}_{j \to l} = \pi_j p_{jl},
\end{equation}
where $\boldsymbol{\pi} = (\pi_j)$ is the stationary distribution of $\mathbf{p}$. 
\end{proposition}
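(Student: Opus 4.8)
The plan is to verify directly that the row vector $\tilde{\boldsymbol{\pi}}$ defined in (\ref{wa}) is a probability distribution that is fixed by $\tilde{\mathbf{p}}$. Since $\tilde{\mathbf{p}}$ is irreducible by Proposition \ref{ptilde}, the Perron--Frobenius theorem then guarantees that such a fixed vector is the unique stationary distribution of the chain, so it suffices to check normalization and left-invariance.

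First I would check normalization. Because every row of $\mathbf{p}$ sums to $1$, summing $\tilde{\pi}_{j \to l} = \pi_j p_{jl}$ first over $l$ and then over $j$ gives
\begin{displaymath}
\sum_b \tilde{\pi}_b = \sum_j \pi_j \sum_l p_{jl} = \sum_j \pi_j = 1.
\end{displaymath}
The substantive step is to verify $\tilde{\boldsymbol{\pi}} \tilde{\mathbf{p}} = \tilde{\boldsymbol{\pi}}$ componentwise. Fix a node $b = [j \to l]$ of $\tilde{\mathcal{P}}$. By the construction of $\tilde{\mathcal{P}}$, the only nodes $a$ with $\tilde{p}_{ab} \neq 0$ are the in-arcs $a = [i \to j]$ of $j$ in $\mathcal{P}$, and each of them carries the same weight $\tilde{p}_{ab} = p_{jl}$. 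Hence
\begin{displaymath}
(\tilde{\boldsymbol{\pi}} \tilde{\mathbf{p}})_b = \sum_i \tilde{\pi}_{i \to j}\, \tilde{p}_{i \to j,\, j \to l} = p_{jl} \sum_i \tilde{\pi}_{i \to j} = p_{jl} \sum_i \pi_i p_{ij}.
\end{displaymath}
The remaining sum $\sum_i \pi_i p_{ij}$ is the $j$-th entry of the row vector $\boldsymbol{\pi}\mathbf{p}$, which equals $\pi_j$ because $\boldsymbol{\pi}$ is stationary for $\mathbf{p}$ (Proposition \ref{stationary_p}). Therefore $(\tilde{\boldsymbol{\pi}} \tilde{\mathbf{p}})_b = \pi_j p_{jl} = \tilde{\pi}_b$, and combined with the normalization above this identifies $\tilde{\boldsymbol{\pi}}$ as the stationary distribution of $\tilde{\mathbf{p}}$.

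I do not anticipate a genuine obstacle here: the computation is short once one keeps careful track of which nodes of $\tilde{\mathcal{P}}$ point to a given node $b = [j \to l]$ -- namely the in-arcs of $j$ -- and observes that the weighted sum of the $\tilde{\pi}$'s over those in-arcs reassembles exactly one application of $\mathbf{p}$ to $\boldsymbol{\pi}$. The only mild bookkeeping subtlety is that $\tilde{p}_{ab}$ depends on $b$ alone through the out-arc weight $p_{jl}$, which is precisely what lets it be pulled out of the sum over $a$; everything else is a direct consequence of stationarity of $\boldsymbol{\pi}$ for $\mathbf{p}$.
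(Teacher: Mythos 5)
Your proof is correct and follows essentially the same route as the paper: verify normalization, then check left-invariance by fixing $b = [j \to l]$, summing over the in-arcs $a = [i \to j]$, pulling out the common weight $p_{jl}$, and invoking $\boldsymbol{\pi}\mathbf{p} = \boldsymbol{\pi}$. The only cosmetic difference is in the normalization step, where you sum over $l$ first and use row-stochasticity of $\mathbf{p}$ while the paper sums over $j$ first and uses stationarity of $\boldsymbol{\pi}$; both are valid.
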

\begin{proof}
The entries $\tilde{\pi}_b$ of the row vector $\tilde{\boldsymbol{\pi}}$ sum to 1 because $\boldsymbol{\pi}$ is a left eigenvector of $\mathbf{p}$: 
\begin{displaymath}
\sum_b \tilde{\pi}_{b} = \sum_{j,l} \tilde{\pi}_{j \to l} = \sum_{j,l} \pi_j p_{jl} = \sum_l \left( \sum_j \pi_j p_{jl} \right) = \sum_l \pi_l = 1.
\end{displaymath}
Let the index $b = [j \to l]$ be fixed. Then,
\begin{displaymath}
\sum_a \tilde{\pi}_a \tilde{p}_{ab} = \sum_{i} \tilde{\pi}_{i \to j} \tilde{p}_{i \to j,j \to l} = \sum_i \pi_i p_{ij}p_{jl} = \left( \sum_i \pi_i p_{ij} \right) p_{jl} = \pi_j p_{jl} = \tilde{\pi}_b,
\end{displaymath}
so that $\tilde{\boldsymbol{\pi}}$ is a left eigenvector of $\tilde{\mathbf{p}}$.
\end{proof}

\section{Generation time}\label{gentime} 
Before giving the main result of this study, we recall tools of perturbation analysis. For a primitive matrix $\mathbf{a}$ with dominant eigenvalue $\lambda$, the \textit{sensitivity} of $\lambda$ to changes in the parameter $x$ is
\begin{displaymath}
s_\lambda(x) = \frac{\partial \lambda}{\partial x}.
\end{displaymath}
The sensitivity of $\lambda$ to changes in the matrix entry $a_{ij}$ is given by \cite{Demetrius_1969,Caswell_1978a}
\begin{displaymath}
s_{ij} = \frac{\partial \lambda}{\partial a_{ij}} = \frac{v_i w_j}{\mathbf{v} \mathbf{w}}.
\end{displaymath}
The \textit{elasticity} of $\lambda$ to changes in $x$ quantifies proportional changes:
\begin{displaymath}
e_\lambda (x) = \frac{x}{\lambda} \frac{\partial \lambda}{\partial x},
\end{displaymath}
i.e., if $x$ changes by a proportion $\alpha$ then $\lambda$ changes by the proportion $\alpha e_\lambda(x)$. The elasticity of $\lambda$ to changes in the matrix entry $a_{ij}$ is then given by
\begin{equation}\label{elas_ij}
e_{ij} = \frac{a_{ij}}{\lambda} \frac{\partial \lambda}{\partial a_{ij}} = \frac{a_{ij}}{\lambda} \frac{v_i w_j}{\mathbf{v} \mathbf{w}}.
\end{equation}

\begin{theorem}\label{main}
Let $\mathbf{a}$ be a non negative primitive matrix associated with a weighted directed graph $\mathcal{A}$ so that $^t\mathbf{a}$ is the adjacency matrix of $\mathcal{A}$. Let R be a subset of the set of arcs of $\mathcal{A}$. Then the mean time of first return to R is
\begin{equation}\label{T_R}
T_R = \frac{1}{\displaystyle\sum_{[j \to i] \in R} e_{ij}},
\end{equation} 
where $e_{ij}$ is the elasticity of the dominant eigenvalue of $\mathbf{a}$ to changes in the entry $a_{ij}$.

In particular, if $R$ is the set of reproductive arcs of $\mathcal{A}$, then the generation time $T$ is given by
\begin{equation}\label{T}
T = \frac{1}{\displaystyle\sum_{[j \to i] \in R} e_{ij}}.
\end{equation} 
\end{theorem}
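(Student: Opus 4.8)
The plan is to obtain (\ref{T_R}) by reading it off the line--graph Markov chain $\tilde{\mathbf{p}}$ built in Section \ref{reproarcs}, and then to specialize $R$ to the reproductive arcs. First I would observe that a subset $R$ of the arcs of $\mathcal{A}$ is, tautologically, a subset of the node set of $\tilde{\mathcal{P}}$: the arc $[j\to i]$ of $\mathcal{A}$ (the one carrying the weight $a_{ij}$) is the time--reversed arc $[i\to j]$ of $\mathcal{P}$, hence the node $[i\to j]$ of $\tilde{\mathcal{P}}$. By the second assertion of Proposition \ref{ptilde}, a trajectory of the walk on $\mathbf{p}$ traverses the transition $[i\to j]$ at a given step exactly when the associated trajectory of $\tilde{\mathbf{p}}$ visits the node $[i\to j]$; consequently the mean time of first return to $R$ --- understood as the mean time between two successive traversals, by the (backward, genealogical) walk, of an arc of $\mathcal{A}$ lying in $R$ --- equals the mean time of first return of $\tilde{\mathbf{p}}$ to the node set $\{[i\to j]:[j\to i]\in R\}$.

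Next, by the first assertion of Proposition \ref{ptilde} the matrix $\tilde{\mathbf{p}}$ is an irreducible Markov matrix (indeed primitive, since $\mathbf{p}$, hence $\tilde{\mathbf{p}}$, has the same digraph structure as the primitive $\mathbf{a}$), and its stationary distribution $\tilde{\boldsymbol{\pi}}$ is given by (\ref{wa}). Applying the return--time formula (\ref{1sursompi}) to $\tilde{\mathbf{p}}$ and this node set yields
\[
T_R=\frac{1}{\displaystyle\sum_{[j\to i]\in R}\tilde{\pi}_{[i\to j]}}
   =\frac{1}{\displaystyle\sum_{[j\to i]\in R}\pi_i\,p_{ij}}.
\]

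The one substantive step is then the identity $\pi_i p_{ij}=e_{ij}$. Substituting (\ref{pi_j}) for $\pi_i$ and (\ref{pij}) for $p_{ij}$,
\[
\pi_i\,p_{ij}
=\frac{v_iw_i}{\mathbf{v}\mathbf{w}}\cdot\frac{a_{ij}w_j}{\lambda w_i}
=\frac{a_{ij}}{\lambda}\,\frac{v_iw_j}{\mathbf{v}\mathbf{w}}
=e_{ij},
\]
the last equality being the definition (\ref{elas_ij}) of the elasticity; as a consistency check, summing over all arcs gives $\sum_{i,j}e_{ij}=1$, exactly because $\tilde{\boldsymbol{\pi}}$ is a probability vector (equivalently, because $\sum_j a_{ij}w_j=\lambda w_i$). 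Plugging this into the previous display proves (\ref{T_R}). For the final statement, the generation time $T$ was defined in Section \ref{reproarcs} as precisely the mean time of first return of the genealogical walk to the set of reproductive arcs, so (\ref{T}) is the case ``$R=$ reproductive arcs'' of (\ref{T_R}); one then checks that for the Leslie matrix this recovers the classical expression (\ref{gentime1}), cf.\ Example \ref{example_leslie}.

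I do not expect a genuine obstacle, since the real work is already packaged in Propositions \ref{ptilde} and \ref{stationary_ptilde}. The two points that need care are: (i) the transposition bookkeeping --- keeping straight that $R$ indexes arcs of $\mathcal{A}$, whose elements are the \emph{reversed} arcs of $\mathcal{P}$ and hence the nodes of $\tilde{\mathcal{P}}$, so that the elasticity appearing in the sum has its indices in ``matrix order'' $e_{ij}$ while the arc is written $j\to i$; and (ii) justifying that (\ref{1sursompi}) is valid for a subset and not merely for a single state, which is Kac's formula and applies here because $\tilde{\mathbf{p}}$ is finite and irreducible (with the return time taken from $\tilde{\boldsymbol{\pi}}$ conditioned on the subset).
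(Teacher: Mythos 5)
Your proposal is correct and follows exactly the paper's own argument: identify the arcs of $R$ with nodes of $\tilde{\mathcal{P}}$, invoke Proposition \ref{ptilde} and the return-time formula (\ref{1sursompi}) for the chain $\tilde{\mathbf{p}}$, and verify the identity $\tilde{\pi}_{i \to j} = \pi_i p_{ij} = e_{ij}$ via (\ref{pij}), (\ref{pi_j}) and (\ref{elas_ij}). The only additions beyond the paper's proof are welcome clarifications (the consistency check $\sum_{i,j} e_{ij} = 1$ and the explicit appeal to Kac's formula for subsets).
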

\begin{proof}
Arc $[j \to i]$ in $\mathcal{A}$ corresponds to arc $[i \to j]$ in $\mathcal{P}$. By (\ref{1sursompi}) and Proposition \ref{ptilde}, the mean time of first return to the arcs of $R$ is 
\begin{equation}\label{1surpi}
\frac{1}{\displaystyle\sum_{[j \to i] \in R} \tilde{\pi}_{i \to j}}.
\end{equation}
Using (\ref{pij}), (\ref{wa}), (\ref{elas_ij}), we have 
\begin{equation}\label{eij}
\tilde{\pi}_{i \to j} = \pi_i p_{ij} = \frac{a_{ij}w_j}{\lambda w_i} \frac{v_i w_i}{\mathbf{v} \mathbf{w}} = \frac{a_{ij}}{\lambda} \frac{v_i w_j}{\mathbf{v} \mathbf{w}} = e_{ij}.
\end{equation}
\end{proof}

\begin{example}\label{example_teasel}
For \textit{Dipsacus sylvestris} (Fig. \ref{fig_teasel}), the matrix of elasticities is
\begin{displaymath}
\begin{bmatrix}
0      & 0      & 0      & 0      & 0      & \underline{0.0667}\\
0.0007 & 0      & 0      & 0      & 0      & 0\\
0.0238 & 0.0007 & 0.0004 & 0      & 0      & \underline{0.0045}\\
0.0073 & 0      & 0.0025 & 0.0271 & 0      & \underline{0.2285}\\
0.0563 & 0      & 0.0051 & 0.1875 & 0.0226 & \underline{0.0439}\\
0      & 0      & 0      & 0.0509 & 0.2928 & 0
\end{bmatrix} 
\end{displaymath}
where the underlined entries correspond to the 4 reproductive transitions. The generation time is
\begin{displaymath}
T = \frac{1}{0.0667 + 0.0045 + 0.2285 + 0.0439} = 2.91 \mbox{ years},
\end{displaymath}
in accordance with the computation of Cochran and Ellner \cite{CochranEllner_1992}.
\end{example} 

\begin{remark}
The matrix $\mathbf{a}$ can be decomposed 
\begin{displaymath}
\mathbf{a} = \mathbf{r} + \mathbf{s},
\end{displaymath}
where $\mathbf{r}$, $\mathbf{s}$ correspond to the reproductive and non reproductive arcs respectively (in the Leslie matrix, the matrix of fertility and survival rates respectively). A convenient formula for computing the  generation time is then
\begin{displaymath}
T = \lambda \frac{\mathbf{v}\mathbf{w}}{\mathbf{v}\mathbf{r}\mathbf{w}}.
\end{displaymath}
In this expression, the terms are matrix products (the denominator is the matrix product of the row vector $\mathbf{v}$, the matrix $\mathbf{r}$, and the column vector $\mathbf{w}$).
\end{remark} 

\begin{remark}
Theorem \ref{main} provides a novel interpretation of the elasticity $e_{ij}$ of the growth rate $\lambda$ to changes in a matrix entry $a_{ij}$. Let us consider a particle traversing the digraph $\mathcal{A}$, i.e., performing a random walk in $\mathcal{P}$. Then by (\ref{eij}), the elasticity $e_{ij}$ is the frequency at which the particle traverses the arc $[j \to i]$.
\end{remark} 

\section{Generation time distribution}
We provide a formula for the distribution of the random variable $\mathcal{T}$ whose expectation is the generation time, $T = \mathbb{E}[\mathcal{T}]$. 

The matrix $\tilde{\mathbf{p}}$ representing the transition probabilities between arcs of $\mathcal{A}$ can be decomposed
\begin{displaymath} 
\tilde{\mathbf{p}} = \left[ \begin{array}{c | c}
    \tilde{\mathbf{p}}_{RR} & \tilde{\mathbf{p}}_{RS} \\
    \hline 
    \tilde{\mathbf{p}}_{SR} & \tilde{\mathbf{p}}_{SS}
  \end{array} \right],
\end{displaymath}
with $R$ the set of reproductive arcs, $S$ the set of non reproductive arcs. In this decomposition, the submatrix $\tilde{\mathbf{p}}_{RR}$ maps $R$ to $R$,  and similar definitions hold for the other submatrices. 

The stationary distribution of $\tilde{\mathbf{p}}$ can also be decomposed
\begin{displaymath} 
\tilde{\boldsymbol{\pi}} = 
\left[
\tilde{\boldsymbol{\pi}}_R \,
\vline \,
\tilde{\boldsymbol{\pi}}_S      
\right].
\end{displaymath}
Using (\ref{1surpi}), the subvectors $\tilde{\boldsymbol{\pi}}_R$ and $\tilde{\boldsymbol{\pi}}_S$ are now scaled so that their entries sum to 1:
\begin{displaymath} 
\boldsymbol{\varpi} = \left[ \boldsymbol{\varpi}_R \, \vline \, \boldsymbol{\varpi}_S \right] = \left[ T_R \tilde{\boldsymbol{\pi}}_R \, \vline \, T_S \tilde{\boldsymbol{\pi}}_S \right].
\end{displaymath}

\begin{proposition}\label{distribT}
The distribution of $\mathcal{T}$ is given by
\begin{displaymath} 
\mathbb{P}[\mathcal{T} = k] = \left\{
\begin{array}{c c}
\boldsymbol{\varpi}_R \, \tilde{\mathbf{p}}_{RR} \, \mathbf{e}, &\quad k = 1,\\
\\
\boldsymbol{\varpi}_R \, \tilde{\mathbf{p}}_{RS} \, (\tilde{\mathbf{p}}_{SS})^{k - 2} \, \tilde{\mathbf{p}}_{SR} \, \mathbf{e},  &\quad k \geqslant 2.\\
\end{array}\right.
\end{displaymath}
Here, $\mathbf{e}$ is a vector of ones of the same dimension as $\boldsymbol{\varpi}_R$.
\end{proposition}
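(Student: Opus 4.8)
The plan is to read $\mathcal{T}$ as the first-return time to the set $R$ of reproductive arcs for the Markov chain with transition matrix $\tilde{\mathbf{p}}$, started from the law of a ``typical'' visit to $R$, and then to expand that return time by conditioning on the excursion through $S$.

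First I would fix the initial distribution. Running the stationary chain $\tilde{\mathbf{p}}$ and recording its state only at the time steps at which it lies in $R$, the long-run fraction of those spent at the arc $a\in R$ equals $\tilde{\pi}_a\big/\sum_{b\in R}\tilde{\pi}_b$; by the return-time formula (\ref{1sursompi}) applied to $\tilde{\mathbf{p}}$ and the subset $R$, the denominator is $1/T_R$, so this conditional law is exactly $\boldsymbol{\varpi}_R = T_R\tilde{\boldsymbol{\pi}}_R$, whose entries consequently sum to $1$. That $\tilde{\mathbf{p}}$ is irreducible -- hence that these frequencies and the return time are well defined and a.s.\ finite -- follows from Proposition \ref{ptilde} together with the primitivity of $\mathbf{a}$. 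Thus $\mathcal{T}$ has the law of the number of steps the chain needs to come back to $R$ when started at $a\sim\boldsymbol{\varpi}_R$.

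Next I would carry out the first-passage decomposition. Conditioning on $X_0=a\in R$: the event $\{\mathcal{T}=1\}$ is simply $\{X_1\in R\}$, of probability $\sum_{b\in R}\tilde{p}_{ab}=(\tilde{\mathbf{p}}_{RR}\mathbf{e})_a$, which averaged against $\boldsymbol{\varpi}_R$ gives $\boldsymbol{\varpi}_R\,\tilde{\mathbf{p}}_{RR}\,\mathbf{e}$. For $k\geqslant 2$, the event $\{\mathcal{T}=k\}$ forces $X_1,\dots,X_{k-1}\in S$ and $X_k\in R$; summing the products of transition weights over all such paths gives $(\tilde{\mathbf{p}}_{RS}\,(\tilde{\mathbf{p}}_{SS})^{k-2}\,\tilde{\mathbf{p}}_{SR}\,\mathbf{e})_a$, the middle factor accounting for the $k-2$ intermediate steps inside $S$, and averaging against $\boldsymbol{\varpi}_R$ yields the second case. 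As a consistency check one may sum over $k$: since the spectral radius of $\tilde{\mathbf{p}}_{SS}$ is strictly less than $1$ (the chain eventually leaves $S$ because $\tilde{\mathbf{p}}$ is irreducible and $R\neq\emptyset$), $\sum_{k\geqslant 2}(\tilde{\mathbf{p}}_{SS})^{k-2}=(\mathbf{I}-\tilde{\mathbf{p}}_{SS})^{-1}$, so $\sum_{k\geqslant 1}\mathbb{P}[\mathcal{T}=k]=\boldsymbol{\varpi}_R\big(\tilde{\mathbf{p}}_{RR}+\tilde{\mathbf{p}}_{RS}(\mathbf{I}-\tilde{\mathbf{p}}_{SS})^{-1}\tilde{\mathbf{p}}_{SR}\big)\mathbf{e}=1$, and a similar summation recovers $\mathbb{E}[\mathcal{T}]=T_R$, in agreement with Theorem \ref{main}.

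The step I expect to be delicate is the first one: making rigorous that $\boldsymbol{\varpi}_R$ is the correct initial law, that is, the Palm-type assertion that in the long run the chain's state at the instants it sits in $R$ is governed by the stationary law conditioned on $R$ (which also pins down the precise meaning of $\mathcal{T}$). Once that is granted, the block expansion in the second step is routine bookkeeping.
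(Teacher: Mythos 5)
Your proposal is correct and follows essentially the same route as the paper: the paper's own proof is exactly this first-passage decomposition (start from $\boldsymbol{\varpi}_R$, either return to $R$ in one step, or pass through $S$ for $k-2$ steps before returning), stated verbally. Your added care about why $\boldsymbol{\varpi}_R$ is the right initial law and the consistency checks on normalization and expectation go beyond what the paper writes down, but the underlying argument is identical.
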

\begin{proof}
When $\mathcal{T} = 1$, we have traveled directly from a reproductive arc to another. The probability of this event is the sum of the probability of being on a reproductive arc, given by the entries of $\boldsymbol{\varpi_R}$, times the probability of going from this arc to another reproductive arc, given by the entries of $\tilde{\mathbf{p}}_{RR}$. When $\mathcal{T} = k \geqslant 2$, starting from an arc of $R$ (probabilities $\boldsymbol{\varpi}_R$), we first go to an arc of $S$ (probabilities $\tilde{\mathbf{p}}_{RS}$) before spending $k - 2$ time intervals on $S$ (probabilities $\tilde{\mathbf{p}}_{SS}$), and then return to $R$ (probabilities $\tilde{\mathbf{p}}_{SR}$).
\end{proof}

\section{Lebreton's formula}\label{lebreton} 

\begin{theorem}\label{c_arcs}
Let $c$ be a common parameter multiplying the entries $a_{ij}$ associated with the reproductive arcs, then
\begin{equation}\label{1surT}
e_\lambda(c) = \frac{1}{T}.
\end{equation}
Let $d$ be a common parameter multiplying the entries $a_{ij}$ associated with the non reproductive arcs, then
\begin{equation}\label{1moins1surT}
e_\lambda(d) = 1 - \frac{1}{T}.
\end{equation}
\end{theorem}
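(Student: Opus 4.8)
The plan is to reduce both identities to Theorem \ref{main} together with the single fact that the elasticities $e_{ij}$, summed over \emph{all} arcs of $\mathcal{A}$, equal $1$.

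First I would treat the reproductive parameter $c$. Using the decomposition of the Remark above, write $\mathbf{a}(c) = c\,\mathbf{r} + \mathbf{s}$, so that $a_{ij}(c) = c\,a_{ij}$ for $[j\to i]\in R$ and $a_{ij}(c) = a_{ij}$ otherwise, with $\mathbf{a} = \mathbf{a}(1)$ and $\lambda = \lambda(1)$. For $c$ in a neighbourhood of $1$ the matrix $\mathbf{a}(c)$ stays non negative with the same zero/non-zero pattern as $\mathbf{a}$, hence primitive, so $\lambda(c)$ is differentiable there. The chain rule gives
\[
\frac{\partial \lambda}{\partial c}\Big|_{c=1} \;=\; \sum_{[j\to i]\in R}\frac{\partial \lambda}{\partial a_{ij}}\cdot\frac{\partial a_{ij}(c)}{\partial c}\Big|_{c=1} \;=\; \sum_{[j\to i]\in R} a_{ij}\,\frac{\partial \lambda}{\partial a_{ij}}.
\]
Multiplying by $1/\lambda$ and using the definition (\ref{elas_ij}), this is exactly $e_\lambda(c) = \sum_{[j\to i]\in R} e_{ij}$, which by Theorem \ref{main} equals $1/T$; this proves (\ref{1surT}).

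Next, for the non reproductive parameter $d$, the identical computation with $\mathbf{a}(d) = \mathbf{r} + d\,\mathbf{s}$ yields $e_\lambda(d) = \sum_{[j\to i]\in S} e_{ij}$, where $S$ denotes the complement of $R$ in the arc set of $\mathcal{A}$. It remains to note that $\sum_{\text{all arcs}} e_{ij} = 1$. This is immediate from (\ref{eij}) and Proposition \ref{stationary_ptilde}: the elasticities are precisely the entries $\tilde{\pi}_{i\to j}$ of the stationary distribution $\tilde{\boldsymbol{\pi}}$ of $\tilde{\mathbf{p}}$, and a stationary distribution sums to $1$. (Equivalently, $\lambda$ is homogeneous of degree one in the entries of $\mathbf{a}$, so Euler's relation gives $\sum_{i,j} a_{ij}\,\partial\lambda/\partial a_{ij} = \lambda$.) Hence $\sum_{[j\to i]\in S} e_{ij} = 1 - \sum_{[j\to i]\in R} e_{ij} = 1 - 1/T$, which is (\ref{1moins1surT}).

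The only genuine point to be careful about is that $R$ and $S$ partition the arc set and that the single scalar $c$ (resp. $d$) multiplies exactly the intended block of entries, so that the chain rule is applied to the correct collection of $a_{ij}$; beyond that, one just needs the routine observation that primitivity — and hence smoothness of $\lambda(c)$ near $c=1$ — is preserved under the perturbation. Everything else is a one-line consequence of Theorem \ref{main} and the normalization of $\tilde{\boldsymbol{\pi}}$.
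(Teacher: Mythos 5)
Your proposal is correct and follows essentially the same route as the paper: the chain rule applied to $\lambda$ as a function of the common factor, reduction to $\sum_{[j\to i]\in R} e_{ij} = 1/T$ via Theorem \ref{main}, and the normalization $\sum_{i,j} e_{ij} = 1$ for the non reproductive case. The only difference is that you supply an explicit justification (via $\tilde{\boldsymbol{\pi}}$ or Euler's relation) for the elasticities summing to $1$, which the paper simply asserts.
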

\begin{proof}
For a reproductive arc $[j \to i] \in R$, we have $a_{ij} = c b_{ij}$, hence $\frac{\partial a_{ij}}{\partial c} = b_{ij} = \frac{a_{ij}}{c}$ for $[j \to i] \in R$, and  $\frac{\partial a_{ij}}{\partial c} = 0$ otherwise. Now,
\begin{displaymath}
e_\lambda(c) = \frac{c}{\lambda} \frac{\partial \lambda}{\partial c} = \frac{c}{\lambda} \sum_{i,j} \frac{\partial \lambda}{\partial a_{ij}}\frac{\partial a_{ij}}{\partial c} = \sum_{[j \to i] \in R} \frac{a_{ij}}{\lambda} \frac{\partial \lambda}{\partial a_{ij}} = \sum_{[j \to i ] \in R} e_{ij} = \frac{1}{T},
\end{displaymath}
where the last equality comes from Theorem \ref{main}. For the non reproductive arcs, as the elasticities sum to 1, $e_\lambda(d) = 1 - e_\lambda(c)$, giving (\ref{1moins1surT}).
\end{proof}
Formulas (\ref{1surT}) and (\ref{1moins1surT}) were shown by Lebreton \cite{HoullierLebreton_1986,Lebreton_1996} in the case of the Leslie matrix for a common parameter multiplying the fertilities $f_i$ in the first row, and for a common parameter multiplying the survival rates $s_i$ in the subdiagonal. In the pre-breeding census, fertilities are written $f_i = \sigma s_0 g_i$ with $\sigma$ the primary female sex-ratio, $s_0$ the juvenile survival (from birth to age 1), and $g_i$ the fecundity at age $i$. The demographic parameters $\sigma$ and $s_0$ are common factors of the fertilities.

These formulas have important consequences for life history evolution. For example, short-lived species have small generation time, hence large sensitivity in juvenile survival and primary sex-ratio. By constrast, long-lived species have large generation time, low sensitivity in juvenile survival and large sensitivity in adult survival.

\section{Concluding remarks}\label{conclusion} 
Though we have used a biological formalism to describe the generation time, the setting we have developed is quite general. If a process can be described by a primitive weighted digraph $\mathcal{A}$, and if some arcs of $\mathcal{A}$ can be identified as reproductive in the sense that they lead to the renewal of the entities described by the process, then the generation time can be defined by (\ref{T}). More generally, Theorem \ref{main} provides a way to compute the return time with respect to any property shared by specific transitions of the process.

The generation time can be seen as the mean time by which novelty is brought to a system by its internal dynamics. It remains to explore the consequences of this definition for dynamical systems more general than the linear ones we have considered here. 


\bigskip
\hrule
\bigskip

\noindent 2000 \textit{Mathematics Subject Classification}: Primary 05C38, 92D25; Secondary 05C50, 60J20.\\

\noindent \textit{Keywords}: generation time, life cycle, weighted directed graph, matrix population model, Markov chain, return time, sensitivity, elasticity.

\bigskip
\hrule
\bigskip

\end{document}